\newcommand{\calC}{{\cal C }}
\newcommand{\calE}{{\cal E }}
\newcommand{\calT}{{\cal T }}
\newcommand{\calZ}{{\cal Z }}
\newcommand{\calP}{{\cal P }}
\newcommand{\calG}{{\cal G }}
\newcommand{\calF}{{\cal F }}
\newcommand{\calK}{{\cal K }}
\newcommand{\FF}{\mathbb{F}}
\newcommand{\trace}{\mathop{\mathrm{Tr}}\nolimits}
\newcommand{\css}[1]{\mathrm{CSS}\left({#1}\right)}
\newcommand{\ket}[1]{\left| {#1} \right\rangle}
\newcommand{\bra}[1]{\left\langle {#1} \right|}
\newtheorem{dfn}{Definition}
\newtheorem{lemma}{Lemma}
\begin{document}

\title{Magic state distillation with low overhead}

\author{Sergey \surname{Bravyi}}
\affiliation{IBM Watson Research Center, Yorktown Heights,  NY 10598}

\author{Jeongwan \surname{Haah}}
\affiliation{Institute for Quantum Information and Matter, California Institute of Technology, Pasadena, CA 91125}

\date{12 September 2012}

\begin{abstract}
We propose a new family of error detecting stabilizer codes with an encoding rate $1/3$
that permit a transversal implementation of the gate $T=\exp{(-i\pi Z/8)}$ on all logical qubits.
The new codes are used to construct protocols for distilling high-quality `magic' states
$T \ket +$ by Clifford group gates and Pauli measurements.
The distillation overhead scales as $O(\log^\gamma{(1/\epsilon)})$,
where $\epsilon$ is the output accuracy and $\gamma=\log_2{(3)}\approx 1.6$.
To construct the desired family of codes, we introduce the notion of a triorthogonal matrix
--- a binary matrix in which any pair and any triple of rows have even overlap.
Any triorthogonal matrix gives rise to a stabilizer code with a transversal $T$-gate on all logical qubits, possibly augmented by Clifford gates.
A powerful numerical method for generating triorthogonal matrices is proposed.
Our techniques lead to a two-fold overhead reduction for distilling magic states
with accuracy $\epsilon \sim 10^{-12}$ compared with the best previously known protocol.
\end{abstract}


\maketitle

\section{Introduction}

Quantum error correcting codes provide a means of
trading quantity for quality when unreliable components
must be used to build a reliable quantum device.
By combining together sufficiently  many unprotected noisy qubits
and exploiting their collective degrees of freedom
insensitive to local errors,  quantum coding allows one to
simulate noiseless logical qubits and quantum gates up to any
desired precision provided that the noise level is below a constant threshold value~\cite{Dennis01,Shor96,Knill04,AGP06}.
Protocols for fault-tolerant quantum computation with the
error threshold close to $1\%$  have been proposed recently~\cite{Knill05,RH:cluster2D,Fowler08}.

An important figure of merit of fault-tolerant protocols is the cost
of implementing a given  logical operation such as a unitary gate or a measurement
with a desired accuracy $\epsilon$.  Assuming that elementary operations
on unprotected qubits
have unit cost, all fault-tolerant protocols proposed so far including the ones based on concatenated codes~\cite{AGP06} and topological codes~\cite{RH:cluster2D,RHG07,Fowler08}
enable implementation of a universal set of logical operations
with the cost $O(\log^\beta{(1/\epsilon)})$,
where the scaling exponent $\beta$ depends on a particular protocol.

For protocols based on stabilizer codes~\cite{Gottesman97}
the cost of a logical operation may also depend on
whether the operation is a {\em Clifford} or a {\em non-Clifford} one.
The set of Clifford operations (CO) consists of
unitary Clifford group gates such as the Hadamard gate $H$,
the $\pi/4$-rotation $S=\exp{(i\pi Z/4)}$,
and the CNOT gate,
preparation of ancillary $\ket 0$ states, and measurements in the
$\ket 0, \ket 1$ basis. Logical CO usually have a
relatively low cost as they can be implemented
either transversally~\cite{Gottesman97} or, in the case of topological
stabilizer codes, by the code deformation method~\cite{RHG07,BMD:codedef,Fowler08}.
On the other hand, logical non-Clifford
gates, such as the $\pi/8$-rotation $T=\exp{(-i\pi Z/8)}$
usually lack a transversal implementation~\cite{EastinKnill2009,BravyiKoenig12}
and have a relatively high cost that may exceed the one of CO
by orders of magnitude~\cite{RHG07}. Reducing the cost of
non-Clifford gates is an important problem since
the latter constitute a significant fraction of any
interesting quantum circuit.

The present paper addresses this problem by constructing
low overhead protocols for the magic state distillation --- a
particular method of implementing logical non-Clifford
gates proposed in~\cite{BK04}. A magic state is an ancillary resource
state $\psi$ that
combines two properties: \\
{\em Universality:} Some non-Clifford
unitary gate can be implemented using one copy of $\psi$
and CO. The ancilla $\psi$ can be destroyed in the
process. \\
{\em Distillability:}  An arbitrarily good approximation to $\psi$ can be prepared by
CO, given a supply of raw ancillas $\rho$ with the initial
fidelity  $\bra \psi \rho \ket \psi$ above some constant
threshold value. \\
Since the Clifford group augmented by any non-Clifford gate is computationally
universal~\cite{Nebe00}, magic state distillation can be used to achieve
universality at the logical level provided that logical CO and logical raw ancillas $\rho$ are readily available.

Below we shall focus on the magic state
\[
\ket A = T \ket{+} \sim \ket{0} + e^{i\pi/4} \ket{1} .
\]
A single copy of $\ket A$ combined with a few CO
can be used to implement  the $T$-gate, whereby providing
a computationally universal set of gates~\cite{Boykin00,BK04}.
It was shown by Reichardt~\cite{Reichardt05} that the state $\ket A$
is distillable if and only if the initial fidelity $\bra A \rho \ket A$ is above the threshold value
$(1+1/\sqrt{2})/2\approx 0.854$.

Our main objective will be to minimize
the number of raw ancillas $\rho$
required to distill magic states $\ket A$  with a desired accuracy $\epsilon$.
To be more precise,  let $\sigma$ be a state of $k$ qubits which
is supposed to approximate
$k$ copies of $\ket A$. We will say that $\sigma$ has an
{\em error rate } $\epsilon$ iff the marginal state
of any qubit has an overlap at least $1-\epsilon$ with $\ket A$.
Suppose such a state $\sigma$ can be prepared
by a   distillation protocol  that takes as input
$n$ copies of the raw ancilla $\rho$
and uses only CO. We will say that the protocol
has a {\em distillation cost} $C=C(\epsilon)$ iff $n\le Ck$.
For example, the original distillation protocol of Ref.~\cite{BK04}
based on the $15$-qubit Reed-Muller code has a  distillation cost
$O(\log^\gamma{(1/\epsilon)})$, where
$\gamma=\log_3{(15)}\approx 2.47$.

\section{Summary of results}
\label{sec:results}

Our main result is a new family of distillation protocols for the state $\ket A$
with a distillation cost $O(\log^\gamma{(1/\epsilon)})$,
where  $\gamma=\log_2{\left(\frac{3k+8}{k}\right)}$
and $k$ is an arbitrary even integer. By choosing large enough $k$
the scaling exponent $\gamma$
can be made arbitrarily close to $\log_2{(3)}\approx 1.6$.
The protocol works by concatenating an elementary subroutine
that takes as input $3k+8$ magic states with an error rate $p$
and outputs $k$ magic states with an error rate $O(p^2)$.
For comparison, the best previously known
protocol found by Meier et al.~\cite{MEK} has a distillation cost as above with
the scaling exponent $\gamma=\log_2{(5)}\approx 2.32$.
Distillation protocols with the scaling exponent $\gamma=2$
were recently discovered by Campbell et al.~\cite{Campbell12}
who studied extensions of stabilizer codes, CO,  and magic states to qudits. We conjecture that the scaling exponent $\gamma$ cannot be smaller than $1$
for {\em any} distillation protocol and give some arguments in support of this
conjecture in Section~\ref{sec:full}.

Our distillation scheme borrows two essential ideas from Refs.~\cite{BK04,MEK}.
First, as proposed in~\cite{BK04}, we employ stabilizer codes that admit a special symmetry in favor of transversal $T$-gates and measure the syndrome of such codes
to detect errors in the input magic states.
Secondly, as proposed by Meier et al.~\cite{MEK}, we reduce the distillation cost
significantly by using distance-$2$ codes with multiple logical qubits.
The new ingredient is a systematic method of constructing stabilizer
codes with the desired
properties. To this end we introduce the notion of a triorthogonal matrix ---
a binary matrix in which any pair and any triple of rows have even overlap.
We show that any triorthogonal matrix $G$ with $k$ odd-weight rows
can be mapped to a stabilizer code with $k$ logical qubits that admit a transversal $T$-gate on all logical qubits, possibly augmented by Clifford gates.
Each even-weight row of $G$ gives rise to a stabilizer which is used in the distillation protocol to detect errors in the input magic states.
Finally, we propose a powerful numerical method for generating triorthogonal matrices.
To illustrate its usefulness,
we construct the first example of a distance-$5$ code with a transversal $T$-gate
that encodes one qubit into $49$ qubits.

While the asymptotic scaling of the distillation cost is of great theoretical interest,
its precise value in the non-asymptotic regime may offer valuable insights
on practicality of a given protocol.  Using raw ancillas with the initial
error rate $10^{-2}$ and the target error rate $\epsilon$ between $10^{-3}$ and $10^{-30}$
we computed the distillation cost $C(\epsilon)$ numerically for the optimal
sequence composed of the $15$-to-$1$ protocol of Ref.~\cite{BK04},
and the $10$-to-$2$ protocol of Ref.~\cite{MEK}.  Combining these protocols
with the ones discovered in the present paper we observed
a two-fold reduction of the distillation cost for $\epsilon=10^{-12}$
and a noticeable cost reduction for the entire range of $\epsilon$,
see  Table~\ref{tb:cost} in Section~\ref{sec:cost}.

Since a magic state distillation is meant to be performed
at the logical level of some stabilizer code, throughout this paper
we assume that CO themselves are perfect. Whether or not this simplification is justified depends on the chosen code. More precisely, let the cost of implementing logical CO
 and the distillation cost be $\log^{\beta}(1/\epsilon)$ and $\log^{\gamma}(1/\epsilon)$
respectively, where $\epsilon$ is the desired precision. In the case $\beta<\gamma$,
high-quality CO are cheap and one can safely assume that CO are perfect.
The opposite case when high-quality CO are expensive (i.e. $\beta>\gamma$) is realized, for example, in the topological one-way quantum computer
based on the 3D cluster state introduced by Raussendorf et al.~\cite{RHG07},
where $\beta=3$. As was pointed out in~\cite{RHG07}, in this case
it is advantageous to use expensive high-quality CO only at the final rounds of distillation and use relatively cheap noisy CO for the initial rounds.
Using the $15$-to-$1$ distillation protocol of Ref.~\cite{BK04}
with $\gamma=\log_3{15}\approx 2.47$,
the authors of Ref.~\cite{RHG07} showed how to implement a universal set
of logical gates with the cost $O(\log^{3}(1/\epsilon))$.
A detailed analysis of errors in logical CO was performed by
Jochym-O'Connor et al~\cite{Laflamme12}.

The rest of the paper is organized as follows.
We begin with the definition of triorthogonal matrices
and state their basic properties in Section~\ref{sec:ort}.
The correspondence between  triorthogonal
matrices and  stabilizer codes with a transversal $T$-gate
is described in Section~\ref{sec:codes}.
We introduce
our distillation protocols for the magic state $\ket A$
in Sections~\ref{sec:dist},\ref{sec:full} and Appendix~\ref{appdst}.
A family of distance-$2$ codes with an encoding rate $1/3$
that admit a transversal $T$-gate
is presented in Section~\ref{sec:family}.
We compute the distillation cost of the new protocols and
make comparison with the
previously known protocols in Section~\ref{sec:cost}. A numerical method
of generating triorthogonal matrices is presented in Section~\ref{sec:linear}.
Finally, Appendix~\ref{app49} presents the $[[49,1,5]]$ code with a transversal $T$-gate.

{\em Notations:}  Below we adopt standard notations and terminology pertaining to
quantum stabilizer codes~\cite{NCbook}.
Given a pair of binary vectors $f,g\in \FF_2^n$, let
$(f,g)=\sum_{j=1}^n f_j g_j \pmod 2$ be their inner product
and $|f|$ be the weight of $f$, that is, the number of non-zero entries in $f$.
Given a linear space $\calG\subseteq \FF_2^n$, its dual space
$\calG^\perp$ consists of all vectors $f\in \FF_2^n$ such that $(f,g)=0$ for any
$g\in \calG$. We shall use notations $X,Y,Z$ for the single-qubit Pauli
operators.  Given any single-qubit operator $O$ and a binary vector
$f\in \FF_2^n$, the tensor product $O^{f_1}\otimes \cdots \otimes O^{f_n}$
will be denoted $O(f)$. In particular, $X(f) Z(g)=(-1)^{(f,g)} Z(g) X(f)$.
The Pauli group $\calP_n$ consists of
$n$-qubit Pauli operators $i^\omega \, P_1\otimes \cdots\otimes  P_n$,
where $P_j\in \{I,X,Y,Z\}$, and $\omega\in \mathbb{Z}_4$.
The Clifford group $\calC_n$ consists of all unitary operators $U$
such that $U\calP_n U^\dag =\calP_n$. It is well known that $\calC_n$
is generated by one-qubit gates $H=(X+Z)/\sqrt{2}$ (the Hadamard gate),
$S=\exp{(i\pi Z/4)}$ (the $S$-gate), and the controlled-$Z$ gate $\Lambda(Z)
=\exp{(i\pi \ket{11} \bra{11})}$.
All quantum codes discussed in this paper
are of Calderbank-Shor-Steane (CSS) type~\cite{CSS1,CSS2}.
Given a pair of linear spaces $\calF,\calG\subset \FF_2^n$ such that
$\calF\subseteq \calG^\perp$, the corresponding CSS code
has stabilizer group $\{ X(f)Z(g), \quad f\in \calF,\; g\in \calG\}$
and will be denoted as $\css{X,\calF;Z,\calG}$.

\section{Triorthogonal matrices}
\label{sec:ort}

To describe our distillation protocols let us define
a new class of binary matrices.
\begin{dfn}
A binary matrix $G$ of size $m\times n$ is called triorthogonal iff
the supports of any pair
and any triple of its rows
have even overlap, that is,
\begin{equation}
\label{ort2}
\sum_{j=1}^n G_{a,j} G_{b,j} = {0\pmod 2}
\end{equation}
for all pairs of rows $1\le a<b\le m$
and
\begin{equation}
\label{ort3}
 \sum_{j=1}^n G_{a,j} G_{b,j} G_{c,j} = {0\pmod 2}
\end{equation}
for all triples of rows $1\le a<b<c\le m$.
\end{dfn}
An example of a triorthogonal matrix of size $5\times 14$ is
\begin{equation}
\label{example1}
\setcounter{MaxMatrixCols}{14}
G=
\begin{bmatrix}
1    & 1   & 1   & 1   & 1   & 1    & 1   &     &     &     &     &     &     &     \\
     &     &     &     &     &      &     & 1   & 1   & 1   & 1   & 1   & 1   & 1   \\
\bf 1&     &\bf 1&     &\bf 1&      &\bf 1&\bf 1&     &\bf 1&     &\bf 1&     &\bf 1\\
     &\bf 1&\bf 1&     &     & \bf 1&\bf 1&     &\bf 1&\bf 1&     &     &\bf 1&\bf 1\\
     &     &     &\bf 1&\bf 1& \bf 1&\bf 1&     &     &     &\bf 1&\bf 1&\bf 1&\bf 1\\
\end{bmatrix},
\end{equation}
where only non-zero matrix elements are shown.
The two submatrices of $G$ formed by even-weight and odd-weight rows
will be denoted $G_0$ and $G_1$ respectively.
The submatrix $G_0$ is  highlighted in bold in Eq.~(\ref{example1}).
We shall always assume that $G_1$ consists of the first $k$ rows of $G$
for some $k\ge 0$.
Define linear subspaces $\calG_0,\calG_1,\calG\subseteq \FF_2^n$
spanned by the rows of $G_0$, $G_1$, and $G$ respectively.
Using Eq.~(\ref{ort2}) alone one can easily prove the following.
\begin{lemma}
\label{lemma:simple}
Suppose $G$ is triorthogonal. Then
(i) all rows of $G_1$ are linearly independent over $\FF_2$,
(ii) $\calG_0\cap \calG_1=0$,
(iii)  $\calG_0= \calG\cap \calG^\perp$,
and (iv) $\calG_0^\perp=\calG_1\oplus \calG^\perp$.
\end{lemma}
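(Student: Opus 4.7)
The key observation I would extract from Eq.~(\ref{ort2}) is a single ``inner-product table'': $(G_a, G_b) = 0$ for all pairs of distinct rows $a \neq b$, while $(G_a, G_a) = |G_a| \bmod 2$, which equals $0$ for rows of $G_0$ and $1$ for rows of $G_1$. Each of (i)--(iv) then follows by pairing a candidate vector against rows of $G$ and reading off which linear combinations must vanish; the triple-overlap condition Eq.~(\ref{ort3}) is not needed in this lemma.

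For (i), if $\sum_{a \in S} G_a = 0$ with $S$ indexing rows of $G_1$, pairing with $G_b$ for any $b \in S$ yields $(G_b, G_b) = 1 \neq 0$, forcing $S = \emptyset$. For (ii), given $v \in \calG_0 \cap \calG_1$, I would write $v = \sum_{a \in S_0} G_a$ using $G_0$-rows and simultaneously $v = \sum_{b \in S_1} G_b$ using $G_1$-rows. Pairing each expansion against an arbitrary row $G_c$ of $G_1$ gives $0$ from the $\calG_0$-side (all rows of $G_0$ have zero overlap with $G_c$) and $[c \in S_1]$ from the $\calG_1$-side; matching forces $S_1 = \emptyset$ and hence $v = 0$.

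For (iii), the inclusion $\calG_0 \subseteq \calG \cap \calG^\perp$ is immediate since every row of $G_0$ is orthogonal to every row of $G$ (including itself, by even weight). For the reverse inclusion I would expand $v \in \calG \cap \calG^\perp$ as $\sum_{a \in S_0 \cup S_1} G_a$ and pair against each row of $G_1$; the orthogonality to $\calG$ then forces $[c \in S_1] = 0$ for every such $c$, placing $v$ in $\calG_0$. For (iv), I would first argue that the sum $\calG_1 + \calG^\perp$ is direct: if $v \in \calG_1 \cap \calG^\perp$, then pairing $v = \sum_{b \in S_1} G_b$ against each $G_c$ in $G_1$ gives $0 = [c \in S_1]$. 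A dimension count using (i) and (ii) (which together yield $\dim \calG = \dim \calG_0 + \dim \calG_1$) then matches $\dim(\calG_1 \oplus \calG^\perp) = \dim \calG_1 + (n - \dim \calG) = n - \dim \calG_0 = \dim \calG_0^\perp$, and the inclusion $\calG_1 \oplus \calG^\perp \subseteq \calG_0^\perp$ follows by pairing $u + w$ ($u \in \calG_1$, $w \in \calG^\perp$) against any row of $G_0$: the first cross overlap vanishes because $G_0$- and $G_1$-rows are orthogonal, and the second vanishes because $w \in \calG^\perp$.

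The only subtle point is the asymmetry of the inner-product table between self-overlaps of $G_0$-rows (zero) and $G_1$-rows (one): this asymmetry is precisely what produces the clean separations $\calG_0 \cap \calG_1 = 0$ and $\calG_1 \cap \calG^\perp = 0$. I expect no serious obstacle beyond carefully tracking which rows contribute which self-overlap in each pairing, and being mindful that $\calG_0^\perp$ is perpendicular only to rows of $G_0$ (not of $G_1$), which is what makes room for $\calG_1$ inside it.
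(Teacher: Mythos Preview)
Your proof is correct and follows essentially the same approach as the paper: both extract from Eq.~(\ref{ort2}) the inner-product table $(f^a,f^b)=\delta_{a,b}$ on $G_1$-rows and $(f^a,g)=0$ between $G_1$- and $G_0$-rows, then read off coefficients by pairing against $G_1$-rows to prove (i)--(iii), and finish (iv) via the inclusion $\calG_1\oplus\calG^\perp\subseteq\calG_0^\perp$, the directness $\calG_1\cap\calG^\perp=0$, and a dimension count. Your write-up is slightly more verbose but matches the paper's argument step for step, including the observation that Eq.~(\ref{ort3}) is not used.
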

\begin{proof}
Let $f^1,\ldots,f^m$ be the rows of $G$ such that the first $k$ row
form $G_1$.
By definition, any vector
$f\in \calG_1$ can be written as
$f=\sum_{a=1}^k x_a f^a$ for some $x_a\in \FF_2$.
From Eq.~(\ref{ort2}) we infer that $(f^a,f^b)=\delta_{a,b}$ for all $1\le a,b\le k$
and $(f^a,g)=0$ for any $g\in \calG_0$.
Hence  $x_a=(f,f^a)$. If $f=0$ or $f\in \calG_0$ then $x_a=0$ for all $a$.
This proves (i) and (ii). Since any row of $G_0$ is orthogonal to itself and
any other row of $G$, we get $(f,g)=0$ for all $f\in \calG_0$ and $g\in \calG$.
This implies $\calG_0\subseteq  \calG\cap \calG^\perp$.
If $f=\sum_{a=1}^m x_a f^a\in \calG\cap \calG^\perp$, then
$x_a=(f,f^a)=0$ for all $1\le a\le k$, that is, $f\in \calG_0$.
This proves (iii). Finally, (iv) follows
from $\calG_1\oplus \calG^\perp \subseteq \calG_0^\perp$,
$\calG_1\cap \calG^\perp=0$, and dimension counting.
\end{proof}
As we show in Section~\ref{sec:codes}, any binary matrix $G$ with $n$ columns and $k$ odd-weight rows
satisfying Eq.~(\ref{ort2}) gives rise to a stabilizer code encoding $k$ qubits into $n$ qubits. Condition Eq.~(\ref{ort3}) ensures that this code has the desirable transversality
properties, namely,  the encoded $\ket{ A^{\otimes k} }$ state can be
prepared by applying the transversal $T$-gate $T^{\otimes n}$
to the encoded $\ket{+^{\otimes k}}$, possibly augmented by some Clifford
operator.
To state this more formally,  define $n$-qubit unnormalized states
\begin{equation}
\label{G0G}
\ket{G_0} =\sum_{g\in \calG_0} \ket g \quad \text{and} \quad \ket G =\sum_{g\in \calG} \ket g.
\end{equation}
Define also a state
\begin{equation}
\label{Ak}
| \overline{A^{\otimes k}} \rangle = \prod_{a=1}^k ( I+e^{i\pi/4} X(f^a) ) \, \ket{G_0},
\end{equation}
where $f_1,\ldots,f_k$ are the rows of $G_1$.
\begin{lemma}
\label{lemma:transversal}
Suppose a matrix $G$ is triorthogonal. Then there exists
a Clifford group operator $U$ composed of
$\Lambda(Z)$ and $S$ gates only such that
\begin{equation}
\label{encodedA}
|\overline{A^{\otimes k}}\rangle=UT^{\otimes n} \ket G.
\end{equation}
\end{lemma}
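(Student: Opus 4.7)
The plan is to expand both states in the computational basis, reduce the identity to a phase-matching problem on $\calG$, and then explicitly realize the required phase as a diagonal Clifford built from $S$ and $\Lambda(Z)$ gates. First, $T^{\otimes n}\ket G=\sum_{g\in \calG} e^{i\pi |g|/4}\ket g$ up to a global phase. Using the unique decomposition $g=g_0+\sum_{a\le k}y_a f^a$ with $g_0\in\calG_0$ supplied by Lemma~\ref{lemma:simple}(ii), expanding the product in Eq.~(\ref{Ak}) gives $|\overline{A^{\otimes k}}\rangle=\sum_{g\in\calG}e^{i\pi y(g)/4}\ket g$, where $y(g):=|y|$. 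The goal then reduces to constructing a diagonal Clifford $U$, built from $S$ and $\Lambda(Z)$, satisfying $U\ket g = e^{i\pi(y(g)-|g|)/4}\ket g$ for all $g\in\calG$.

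The next step is a Hamming-weight computation modulo $8$. Writing $g=\sum_a x_a f^a$ and iterating the identity $|u\oplus v|=|u|+|v|-2|u\cdot v|$, I would obtain the M\"obius-type expansion
\begin{equation*}
|g|\equiv\sum_a x_a|f^a|-2\sum_{a<b}x_a x_b|f^a f^b|+4\sum_{a<b<c}x_a x_b x_c|f^a f^b f^c|\pmod 8,
\end{equation*}
with fourth- and higher-order terms automatically vanishing. Conditions~(\ref{ort2}) and~(\ref{ort3}) force both $|f^a f^b|$ and $|f^a f^b f^c|$ to be even, so the cubic term vanishes modulo $8$ while the quadratic becomes a multiple of $4$. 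Writing $|f^a|=\epsilon_a+2v_a$ with $\epsilon_a=[a\le k]$ and $|f^a f^b|=2w_{ab}$, and noting $y(g)=\sum_{a\le k}x_a$, this collapses to $|g|-y(g)\equiv 2\sum_a v_a x_a-4\sum_{a<b}w_{ab}x_a x_b\pmod 8$, a polynomial of degree at most two in the $x_a$'s.

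The main obstacle is converting this polynomial in the abstract coefficients $x_a$ into a quadratic form in the physical qubit labels $g_1,\ldots,g_n$ that a diagonal Clifford can implement. For each $a$ I pick a dual vector $d^a\in\FF_2^n$ with $(d^a,g)\equiv x_a\pmod 2$ for all $g\in\calG$: for $a\le k$ one may take $d^a=f^a$ (triorthogonality implies $(f^a,g)\equiv x_a\pmod 2$), while for $a>k$ any linear extension of the coordinate functional $g\mapsto x_a$ from $\calG$ to $\FF_2^n$ works. Applying the identity $y_1\oplus\cdots\oplus y_r\equiv\sum_i y_i-2\sum_{i<j}y_i y_j\pmod 4$ with $y_j=d^a_j g_j$ then gives $x_a\equiv\sum_j d^a_j g_j-2\sum_{j<l}d^a_j d^a_l g_j g_l\pmod 4$ on $\calG$. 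Substituting this into the formula above and reducing modulo $4$ yields a quadratic form $Q(g)=\sum_j s_j g_j+2\sum_{j<l}c_{jl}g_j g_l$ with $s_j\in\mathbb{Z}_4$ and $c_{jl}\in\mathbb{Z}_2$ determined explicitly from the matrix entries.

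To conclude, I set $U=\prod_j S^{s_j}\prod_{j<l}\Lambda(Z)^{c_{jl}}$, a diagonal Clifford in the group generated by $S$ and $\Lambda(Z)$, which acts as $U\ket g = i^{Q(g)}\ket g$ on the computational basis. For $g\in\calG$ this equals $e^{i\pi(y(g)-|g|)/4}$ by construction, establishing the lemma. The hard part is genuinely the triorthogonality-driven cancellation: without condition~(\ref{ort3}) the expansion of $|g|$ modulo $8$ would retain an irreducible cubic term in $g$ that no diagonal Clifford from $\{S,\Lambda(Z)\}$ could absorb, which is exactly why the triple-overlap constraint is essential.
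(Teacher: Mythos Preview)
Your proof is correct and follows essentially the same approach as the paper: both expand $|g|\pmod 8$ via the inclusion-exclusion/parity identity, use condition~(\ref{ort3}) to kill the cubic term and condition~(\ref{ort2}) to make the quadratic coefficient even, and then convert the resulting degree-$2$ polynomial in the abstract coefficients $x_a$ into a quadratic form in the physical bits $g_j$ by expanding $x_a$ with the same parity identity mod~$4$. The only cosmetic differences are that the paper packages your dual vectors $d^a$ as the rows of a single matrix $B$ and phrases the weight expansion via the formula $\epsilon(y)=\sum_p\binom{|y|}{p}(-2)^{p-1}$ rather than by iterating $|u\oplus v|=|u|+|v|-2|u\cdot v|$; you should also note explicitly (as the paper does) that one may assume the rows of $G$ are linearly independent so that the $x_a$ are well defined.
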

\begin{proof}
Below we promote the elements of binary field $\FF_2$ to the normal integers of $\mathbb{Z}$;
we associate $\FF_2 \ni 0 \mapsto 0 \in \mathbb{Z}$ and $\FF_2 \ni 1 \mapsto 1 \in \mathbb{Z}$.
Unless otherwise noted by ``$(\mathrm{mod}~2)$'' or ``$(\mathrm{mod}~4)$'',
every sum is the usual sum for integers and no modulo-reduction is performed.

When $y = (y_1,\ldots,y_m)$ is a string of $0$ or $1$,
let  $\epsilon(y) \equiv  {|y| \pmod 2}$ be the parity of $y$.
 Let us derive a formula for a phase factor $e^{i\pi \epsilon(y)/4}$
as a function of components $y_a$.
Observe that
\begin{equation}
\label{eq:parity}
\epsilon(y)=\frac{1}{2} \left(1- (1-2)^{|y|}\right)=\sum_{p=1}^{|y|} \binom{|y|}{p} (-2)^{p-1}.
\end{equation}
Since the binomial coefficient $\binom{|y|}{p}$ is the number of 
ways to choose $p$ non-zero components of $y$, 
we may write
\begin{align}
e^{i\pi\epsilon(y)/4}= \exp{\left[
\frac{i\pi}4 \sum_{a=1}^m y_a  \right. } & -\frac{i\pi}2 \sum_{a<b} y_a y_b \nonumber \\
& {\left. + i\pi \sum_{a<b<c} y_a y_b y_c\right]}. \label{eq:mod8}
\end{align}

By definition of the state $\ket G$, one has
\[
T^{\otimes n} \ket G = \sum_{f \in \calG} e^{i{\pi |f|}/4}\, \ket f.
\]
Since $\ket G$ depends on the linear space $\calG$ rather than
the matrix presentation $G$, we may assume that all rows of $G$ are linearly independent over $\FF_2$.
Let $g^1,\ldots,g^m$ be the rows of $G$, and
decompose $f= \sum_{a=1}^m x_a g^a \pmod 2$,
where $x_a \in \{0,1\}$ are uniquely determined by $f$.

Each component $f_j$ of $f$ is
the parity of the bit string $(x_1 g^1_j, x_2 g^2_j, \ldots, x_m g^m_j)$,
and $|f|$ is the sum of $f_j$'s. Hence, Eq.~\eqref{eq:mod8} implies
\begin{align}
e^{i\pi |f| /4}= \exp{\left[ \vphantom{\frac{i\pi}4 \sum_{a=1}^m x_a |g^a| } \right. } &
\frac{i\pi}4 \sum_{a=1}^m x_a |g^a|   -\frac{i\pi}2 \sum_{a<b} x_a x_b |g^a\cdot g^b| \nonumber \\
& {\left. + i\pi \sum_{a<b<c} x_a x_b x_c |g^a \cdot g^b\cdot g^c| \right]}, \label{eq:phase_factor}
\end{align}
where $g^a \cdot g^b$ denotes the bitwise AND operation.
Triorthogonality condition Eq.~(\ref{ort3}) implies that
the triple overlap $|g^a\cdot g^b\cdot g^c|$ is even, so 
we may drop the last term in Eq.~(\ref{eq:phase_factor}).
This is in fact one of the main motivations we consider triorthogonal matrices.

Let the first $k$ rows of $G$ have odd weight and all others even weight, and put
\[
|g^a| =
\begin{cases}
 2 \Gamma_a +1 & \text{if } 1 \le a \le k, \\
 2 \Gamma_a    & \text{otherwise.}
\end{cases}
\]
In addition, Eq.~\eqref{ort2} implies for distinct $a,b$ that
\[
 |g^a \cdot g^b| = 2 \Gamma_{ab}.
\]
Here all $\Gamma_a$ and $\Gamma_{ab}$ are integers.
Thus
\[
e^{i\pi |f| /4}=\exp{\left[ \frac{i\pi}4 \sum_{a=1}^k x_a \right]} \cdot
\exp{\left[  \frac{i\pi}2  Q(x_1,\ldots,x_m)\right]},
\]
where
\[
Q(x)=\sum_{a=1}^m \Gamma_a\, x_a -2 \sum_{a<b} \Gamma_{ab}\,  x_a x_b .
\]
Let us  show that the unwanted phase factor $e^{i\pi Q/2}$ can be
canceled by a unitary Clifford operator that uses only $\Lambda(Z)$ and $S$ gates.
To this end, we rewrite $Q(x)$ as a function of $f$.
As noted earlier, $x_a$ are uniquely determined by $f$.
Indeed, there is a matrix $B$ over $\FF_2$ such that $x_a = \sum_p B_{ap} f_p \pmod 2$,
since $\{ g^a \}$ is a basis of the linear space $\calG$.
(There could be many such $B$.)
We again use Eq.~\eqref{eq:parity} with the observation that
$x_a$ is the parity of the bit string $(B_{a1} f_1, \ldots, B_{an} f_n)$ to infer
\begin{align*}
x_a       &= \sum_p B_{ap} f_p - 2 \sum_{p < q} B_{ap}B_{aq} f_p f_q  & \pmod 4 ,\\
2 x_a x_b &= 2 \sum_{p,q} B_{ap} B_{bq} f_p f_q                       & \pmod 4
\end{align*}
for all $a,b=1,\ldots,m$.
Therefore, we can express $Q(x)$ as
\[
Q(x(f))=\sum_{p=1}^n \Lambda_p f_p - 2\sum_{p<q} \Lambda_{pq} f_p f_q \pmod 4,
\]
where $\Lambda_p, \Lambda_{pq}$ are some integers determined by $B, \Gamma_a$, and $\Gamma_{ab}$,
all of which depend only on our choice of the matrix $G$.
Explicitly, $\Lambda_p = \sum_a \Gamma_a B_{ap} - 2 \sum_{a<b} \Gamma_{ab} B_{ap}B_{bp}$
and $\Lambda_{pq} = \sum_a \Gamma_a B_{ap}B_{aq} - \sum_{a<b} \Gamma_{ab}(B_{ap}B_{bq} + B_{bp}B_{aq})$.

The extra phase factor $e^{i\pi Q/2}$ is canceled by
applying $\Lambda(Z)^{\Lambda_{pq}}$ gate for each pair of qubits $p<q$,
and the gate $(S^\dag)^{\Lambda_p}$ to every qubit $p$.
This defines the desired Clifford operator $U$ composed of $\Lambda(Z)$ and $S$ gates
such that
\begin{equation}
\label{UT}
UT^{\otimes n} \ket f = \exp{\left[ \frac{i\pi}4 \sum_{a=1}^k x_a \right]} \, \ket f
\end{equation}
for all $f=\sum_{a=1}^m x_a g^a \pmod 2 \in \calG$. Therefore,
\[
UT^{\otimes n} \ket G = \prod_{a=1}^k (I + e^{i\pi/4} X(g^a)) \ket{G_0} = |\overline{A^{\otimes k}} \rangle.
\]
\end{proof}

For the later use let us state the following simple fact.
\begin{lemma}
\label{lemma:3rows}
Let $G$ be a triorthogonal matrix without zero columns.
If  $G_1$ is non-empty and
$G_0$ has less than $3$ rows, then $G_0$ must have
at least one zero column.
\end{lemma}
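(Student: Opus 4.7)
The plan is to argue by contradiction: assume $G_0$ has no zero column and show that this forces $G_1$ to be empty, contradicting the hypothesis. Since $G_0$ has at most two rows, I split on the number $\ell$ of rows of $G_0$. The case $\ell=0$ is vacuous (a matrix with no rows has every column trivially zero), so only $\ell=1$ and $\ell=2$ require work.

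For $\ell=1$, let $g$ be the single row of $G_0$. The no-zero-column hypothesis forces $g_j=1$ for every $j$, so $g=(1,\dots,1)$. For any row $f$ of $G_1$, the pair condition Eq.~(\ref{ort2}) gives $(f,g)=\sum_j f_j g_j\equiv 0\pmod 2$; but with $g$ all-ones this sum equals $|f|$, which is odd since $f$ has odd weight. Contradiction.

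For $\ell=2$, let $g^1,g^2$ be the two rows of $G_0$, with supports $S_1,S_2\subseteq\{1,\dots,n\}$. The no-zero-column hypothesis now reads $S_1\cup S_2=\{1,\dots,n\}$. Pick any $f\in G_1$ and let $T$ be its support, so $|T|=|f|$ is odd. Eq.~(\ref{ort2}) applied to the pairs $\{f,g^1\}$ and $\{f,g^2\}$ gives $|T\cap S_1|$ and $|T\cap S_2|$ even, while Eq.~(\ref{ort3}) applied to the triple $\{f,g^1,g^2\}$ gives $|T\cap S_1\cap S_2|$ even. Inclusion–exclusion then yields
\[
|T|=|T\cap(S_1\cup S_2)|=|T\cap S_1|+|T\cap S_2|-|T\cap S_1\cap S_2|\equiv 0\pmod 2,
\]
contradicting the odd parity of $|T|$.

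The only place where triorthogonality as opposed to mere pairwise orthogonality is genuinely needed is the $\ell=2$ step: without the triple condition controlling $|T\cap S_1\cap S_2|\pmod 2$, the inclusion–exclusion would collapse and the argument would fail. Apart from that observation, the whole statement reduces to a short parity calculation, and I do not anticipate any substantive obstacle.
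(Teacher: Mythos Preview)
Your proof is correct and follows essentially the same approach as the paper's. The paper's two-row case partitions the columns of $G_0$ into blocks $A$ (only $g^1$ has a $1$), $B$ (only $g^2$ has a $1$), and $C$ (both have a $1$), then sums the three parity constraints to reach the contradiction; your inclusion--exclusion with $S_1$, $S_2$ is the same computation in different notation, since $|T\cap S_1|=w_A+w_C$, $|T\cap S_2|=w_B+w_C$, and $|T\cap S_1\cap S_2|=w_C$.
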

\begin{proof}
Suppose on the contrary all columns of $G_0$ are nonzero.
If $G_0$ has only one row, it must be the all-ones vector $1^n$.
Then, the inner product between $1^n$ and any row $f$ of $G_1$ is
the weight of $f$ modulo 2, which is odd. But, the orthogonality Eq.~\eqref{ort2}
requires it to be even. This is a contradiction.

Suppose now that $G_0$ has two rows $g_1, g_2$.
By permuting the columns we may assume that
$G_0 = \begin{bmatrix} A & B & C \end{bmatrix}$ where
\[
 A = \begin{bmatrix} 1  & \cdots & 1 \\ 0 &  \cdots & 0 \end{bmatrix}, \,
 B = \begin{bmatrix} 0  & \cdots & 0 \\ 1 &  \cdots & 1 \end{bmatrix}, \,
 C = \begin{bmatrix} 1  & \cdots & 1 \\ 1 &  \cdots & 1 \end{bmatrix}.
\]
Choose an odd-weight row $f$ of $G_1$, and let $w_A, w_B, w_C$ be the weight
of $f$ restricted to the columns of $A,B,C$, respectively.
The (tri)orthogonality Eqs.~(\ref{ort2},\ref{ort3}) implies
\begin{align*}
|g_1 \cdot f| &= w_A + w_C = 0 & \pmod 2, \\
|g_2 \cdot f| &= w_B + w_C = 0 & \pmod 2, \\
|g_1 \cdot g_2 \cdot f| &= w_C = 0 &\pmod 2.
\end{align*}
This is a contradiction since $|f| = w_A + w_B + w_C = 1 \pmod 2$.
\end{proof}

\section{Stabilizer codes based on triorthogonal matrices}
\label{sec:codes}

Given a triorthogonal matrix $G$ with $k$ odd-weight rows,
define a stabilizer code $\css{X,\calG_0;Z,\calG^\perp}$
with $X$-type stabilizers $X(f)$, $f\in \calG_0$, and $Z$-type stabilizers
$Z(g)$, $g\in \calG^\perp$. The inclusion $\calG_0\subseteq \calG$ implies
that all stabilizers pairwise commute.
\begin{lemma}
\label{lemma:code}
The  code $\css{X,\calG_0;Z,\calG^\perp}$
has $k$ logical qubits. Its logical Pauli operators  can be chosen as
\begin{equation}
\label{logical}
\overline{X}_a= X(f^a) \quad \mbox{and} \quad \overline{Z}_a=Z(f^a), \quad a=1,\ldots,k,
\end{equation}
where $f^1,\ldots,f^k$ are the rows of $G_1$.
The states $\ket{G_0}$, $\ket G$, and $| \overline{A^{\otimes k}} \rangle$
defined in Eqs.~(\ref{G0G},\ref{Ak}) coincide with encoded
states $\ket{0^{\otimes k}}$, $\ket{+^{\otimes k}}$, and $\ket{ {A^{\otimes k}} }$ respectively.
\end{lemma}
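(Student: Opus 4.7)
The plan is to chase the consequences of Lemma~\ref{lemma:simple} through the standard CSS code formalism. First, $\calG_0\subseteq\calG^\perp$ follows from Lemma~\ref{lemma:simple}(iii), so the $X$-type and $Z$-type stabilizers commute and the code $\css{X,\calG_0;Z,\calG^\perp}$ is well-defined. The usual CSS counting then yields $n-\dim\calG_0-\dim\calG^\perp=\dim\calG-\dim\calG_0$ logical qubits, and this equals $\dim\calG_1=k$ by Lemma~\ref{lemma:simple}(i)--(ii), since $\calG=\calG_0\oplus\calG_1$.

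Next I would verify that the operators in Eq.~(\ref{logical}) really are logical Paulis. Commutation with the stabilizer group reduces to two orthogonality checks: $(f^a,g)=0$ for every $g\in\calG^\perp$ holds because $f^a\in\calG_1\subseteq\calG$, while $(f^a,g)=0$ for $g\in\calG_0$ is immediate from Eq.~(\ref{ort2}). Thus $\overline{X}_a=X(f^a)$ and $\overline{Z}_a=Z(f^a)$ both commute with all stabilizers. The identity $(f^a,f^b)=\delta_{ab}$, established in the proof of Lemma~\ref{lemma:simple}, then supplies the canonical anticommutation $\overline{X}_a\overline{Z}_b=(-1)^{\delta_{ab}}\overline{Z}_b\overline{X}_a$. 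Independence from the stabilizer group follows from $\calG_1\cap\calG^\perp=0$ (a byproduct of Lemma~\ref{lemma:simple}(iv)) for $\overline{Z}_a$, and from $\calG_0\cap\calG_1=0$ for $\overline{X}_a$. So we really have $2k$ independent logical operators with the right commutation relations.

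Finally I would identify the three encoded states. For $\ket{G_0}=\sum_{g\in\calG_0}\ket g$ every $X$-stabilizer $X(f)$ with $f\in\calG_0$ permutes the summation set, while every $Z$-stabilizer $Z(g)$ with $g\in\calG^\perp$ and every logical $\overline{Z}_a=Z(f^a)$ acts trivially because $\calG_0$ is orthogonal to both $\calG^\perp$ and $\calG_1$. Hence $\ket{G_0}$ is the encoded $\ket{0^{\otimes k}}$. For $\ket G=\sum_{f\in\calG}\ket f$, the $X$-stabilizers and the operators $\overline{X}_a=X(f^a)$ permute $\calG$ (since $\calG_0,\calG_1\subseteq\calG$), while $Z$-stabilizers act trivially since $\calG\subseteq(\calG^\perp)^\perp$; this identifies $\ket G$ with the encoded $\ket{+^{\otimes k}}$. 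For the last state, note that the operators $(I+e^{i\pi/4}\overline{X}_a)$ commute with each other and each acts as $\ket{\bar 0}_a\mapsto\ket{\bar 0}_a+e^{i\pi/4}\ket{\bar 1}_a$ on the $a$-th logical qubit alone; applying them to the encoded $\ket{0^{\otimes k}}$ produces (up to an overall scalar) the encoded $\ket{A^{\otimes k}}$, matching Eq.~(\ref{Ak}).

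The main obstacle I anticipate is essentially bookkeeping rather than any single deep step: cleanly invoking the four parts of Lemma~\ref{lemma:simple} to dispatch commutation with stabilizers, independence from them, and the logical identifications. The one place that deserves a little care is arguing that the factor $(I+e^{i\pi/4}X(f^a))$ decouples across logical qubits when acting on $\ket{G_0}$; this rests on the coset $f^a+\calG_0$ being disjoint from both $\calG_0$ and from $f^b+\calG_0$ for $b\neq a$, which is exactly $\calG_0\cap\calG_1=0$ together with the $\FF_2$-linear independence of $\{f^a\}$ guaranteed by Lemma~\ref{lemma:simple}(i).
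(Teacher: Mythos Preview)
Your proposal is correct and follows essentially the same route as the paper: verify the commutation relations of $\overline{X}_a,\overline{Z}_a$ with the stabilizers and with each other via Eq.~(\ref{ort2}) and the odd-weight condition, then identify $\ket{G_0}$ as the encoded $\ket{0^{\otimes k}}$ and build $\ket{G}$ and $|\overline{A^{\otimes k}}\rangle$ from it by applying $\prod_a(I+\overline{X}_a)$ and $\prod_a(I+e^{i\pi/4}\overline{X}_a)$. The paper is terser---it omits the explicit dimension count and the independence-from-stabilizers check you spell out---but the argument is the same.
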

\begin{proof}
Indeed, the assumption that
$f^a$ have odd weight and Eq.~(\ref{ort2}) ensure that the operators
defined in Eq.~(\ref{logical}) obey the correct commutation rules, that is,
$\overline{X}_a \, \overline{Z}_b = (-1)^{\delta_{a,b}}  \overline{Z}_b \, \overline{X}_a$.
It remains to check that $\overline{X}_a$ and $\overline{Z}_a$ commute with all
stabilizers.
Given any $Z$-type stabilizer $Z(g)$, $g\in \calG^\perp$, one has
$X(f^a)Z(g)=(-1)^{(f^a,g)} Z(g) X(f^a)=  Z(g) X(f^a)$ since $f^a\in \calG$ and $g\in \calG^\perp$. Given any $X$-type stabilizer $X(f)$, $f\in \calG_0$, one has
$Z(f^a)X(f)=(-1)^{(f^a,f)} X(f) Z(f^a)=  X(f) Z(f^a)$ since $f^a\in \calG$ and
$\calG_0\subseteq \calG^\perp$, see Lemma~\ref{lemma:simple}.
This shows that $\overline{X}_a$ and $\overline{Z}_a$ are indeed   logical Pauli operators on $k$ encoded qubits.

Property~(iii) of Lemma~\ref{lemma:simple} implies that
$Z(g)\, \ket f = \ket f$ for any $f\in \calG_0$ and any $g\in \calG+\calG^\perp$.
Thus the state $\ket{G_0}$ defined
 in Eq.~(\ref{G0G}) coincides with the encoded $\ket{0^{\otimes k}}$ state.
It follows that $\ket G = \prod_{a=1}^k (I+\overline{X}_a) \ket{G_0}$
is the encoded $\ket{+^{\otimes k}}$ state, while
$| \overline{A^{\otimes k}} \rangle = \prod_{a=1}^k (I+e^{i\pi/4}\overline{X}_a) \ket{G_0}$
is the encoded $\ket{A^{\otimes k}}$
(ignoring the normalization).
\end{proof}
Using Lemma~\ref{lemma:code} one can show that 
the operator $UT^{\otimes n}$
defined in Lemma~\ref{lemma:transversal} implements an encoded $T$ gate
on each logical qubit of the code $\css{X,\calG_0;Z,\calG^\perp}$.
Indeed, for any $x\in \FF_2^k$, the encoded state $\ket{x}\equiv \ket{x_1,\ldots,x_k}$ is
\[
\ket{\overline{x}}=\overline{X}_1^{x_1}\cdots \overline{X}_k^{x_k} \ket{G_0}
=\sum_{f\in {\cal G}_0+x_1 f^1 +\ldots +x_k f^k}\; \ket{f}.
\]
Using Eq.~(\ref{UT}) from the proof of Lemma~\ref{lemma:transversal}
one arrives at
\[
UT^{\otimes n} \, \ket{\overline{x}}=e^{i\frac{\pi}4 \sum_{a=1}^k x_a}  \, \ket{\overline{x}}.
\]
This provides a generalization of a transversal $T$-gate to multiple logical qubits.

\section{Distillation subroutine}
\label{sec:dist}

We are now ready to describe the elementary distillation subroutine.
It takes as input $n$ copies of a (mixed) one-qubit ancilla $\rho$ such that
$\bra A \rho \ket A = 1-p$. We shall refer to $p$ as the {\em input error rate}.
Define single-qubit basis states
$ \ket{A_0} \equiv \ket{A}$ and $\ket{A_1} \equiv Z \ket A$.
We shall assume that $\rho$ is
diagonal in the $A$-basis, that is,
\begin{equation}
\label{standard}
\rho=(1-p)\ket{A_0}\bra{A_0} + p \ket{A_1}\bra{A_1}.
\end{equation}
This can always be achieved by applying operators $I$ and $A\equiv e^{-i\pi/4}  SX$
with probability $1/2$ each to every copy of $\rho$.
Note that $A\, \ket{A_\alpha} =(-1)^\alpha \ket{A_\alpha}$,
that is, the random application of $A$ is equivalent to the dephasing in the $A$-basis
which destroys the off-diagonal matrix elements $\bra{A_0} \rho \ket{A_1} $
without changing the fidelity $\bra{A_0} \rho \ket{A_0}$.

Define linear maps
\begin{equation}
\label{TE}
\calT(\eta)=T\eta T^\dag \quad \text{and} \quad
\calE(\eta)=(1-p)\eta + p Z\eta Z
\end{equation}
describing the ideal $T$-gate and the $Z$-error respectively.
Using Clifford operations and one copy of $\rho$ as in Eq.~(\ref{standard})
one can implement a noisy version of the $T$-gate, namely,
$\calE\circ \calT$. A circuit implementing $\calE\circ \calT$
is shown on Fig.~\ref{fig:Tgate}, where the $Z$-error $\calE$
is shown by the $Z$-gate box  with a subscript $p$ indicating the error probability.
One can easily show that this circuit indeed implements $\calE\circ \calT$ by commuting $\calE$ through the CNOT gate and the classically controlled $SX$ gate.

\begin{figure}[h]
\centerline{\includegraphics[height=4cm]{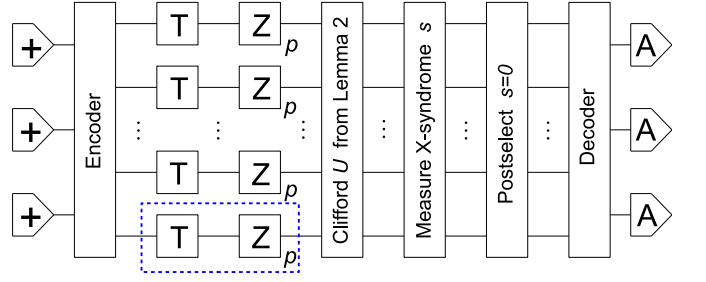}}
\caption{The distillation subroutine for the magic state $\ket A$
based on a triorthogonal matrix $G$.
The encoder prepares $k$ copies of the state $\ket +$
encoded by the stabilizer code $\css{X,\calG_0;Z,\calG^\perp}$.
Implementation of each $T$-gate consumes one ancillary
$\ket A$ state as shown on Fig.~\ref{fig:Tgate}. If the ancillas $\ket A$
have error rate $p$, each ideal $T$-gate is followed by a $Z$-error with
probability $p$. The Clifford operator $U$ is constructed in Lemma~\ref{lemma:transversal}.
Note that $U$ is diagonal in the $Z$-basis and thus
commutes with any $Z$-error. The syndrome $s$ is measured only
for $X$-type stabilizers $X(f^a)$ where $f^a$ are the rows of $G_0$.
In the case when all stabilizers $X(f^a)$ have eigenvalue $+1$ (trivial syndrome)
the decoder is applied.  It returns $k$ copies
of the state $\ket A$ with the overall error probability $O(p^d)$.
The trivial syndrome is observed with probability $1-O(p)$.
}
\label{fig:protocol}
\end{figure}
\begin{figure}[h]
\centerline{\includegraphics[height=2.2cm]{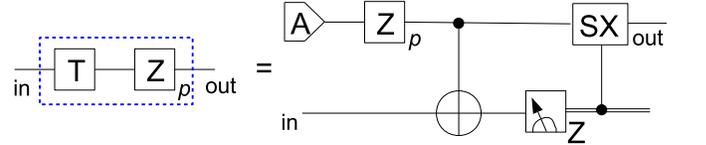}}
\caption{Implementation of the $T$-gate using CO
and one copy of the ancillary state $\ket A$.
If the ancilla
is a mixture of $\ket A$ and $Z \ket A$ with probabilities $1-p$ and $p$
respectively, the circuit enacts a noisy version of the $T$-gate, namely,
$\rho_{out}=(1-p) T \rho_\text{in} T^\dag + p ZT \rho_\text{in} T^\dag Z=\calE\circ \calT(\rho_\text{in})$.
The above circuit is used $n$ times in the subroutine of Fig.~\ref{fig:protocol}.
}
\label{fig:Tgate}
\end{figure}
The entire subroutine is illustrated on Fig.~\ref{fig:protocol}.
The first step is to prepare $k$ copies of the state
$\ket +$ and encode them using the code $\css{X,\calG_0;Z,\calG^\perp}$.
This results in the state $\ket G$ defined in  Eq.~(\ref{G0G})
and requires only CO.

The state $\ket G$ is then acted upon by the map $(\calE\circ \calT)^{\otimes n}$.
The latter can be implemented using CO and $n$ copies of
$\rho$ as shown on Fig.~\ref{fig:Tgate}. This results in a state
\[
\eta_1\equiv(\calE\circ \calT)^{\otimes n}\left( |G\rangle\langle G| \right)
=\calE^{\otimes n}\left( \hat{T} |G\rangle\langle G| \hat{T}^\dag \right),
\]
where $\hat{T}\equiv T^{\otimes n}$. Next we apply the Clifford unitary operator $U$
constructed in Lemma~\ref{lemma:transversal}. Since $U$ involves only $\Lambda(Z)$
and $S$ gates, it commutes with any $Z$-type error. Hence
the state prepared at this point is
\[
\eta_2\equiv U\eta U^\dag
= \calE^{\otimes n}\left( U \hat{T} |G \rangle\langle G| \hat{T}^\dag U^\dag \right)
= \calE^{\otimes n}\left( |\overline{A^{\otimes k}}\rangle \langle \overline{A^{\otimes k}}| \right),
\]
where we have used Eq.~(\ref{encodedA}).
The next step is a non-destructive eigenvalue measurement
for $X$-type stabilizers of the code $\css{X,\calG_0;Z,\calG^\perp}$,
that is, the Pauli operators $X(f^{k+1}),\ldots,X(f^m)$, where
$f^{k+1},\ldots,f^m$ are the rows of $G_0$. If at least one of the measurement
returns the outcome `$-1$', the subroutine returns `FAILED' and the final state
is discarded. If all measured eigenvalues are `$+1$',
the state $\eta_2$ has been projected onto the code space of the
code $\css{X,\calG_0;Z,\calG^\perp}$  and the subroutine is deemed successful
(since we do not have any $X$-type errors, the syndrome of all $Z$-type stabilizers is automatically trivial). This results in a state
\[
\eta_3=\Pi_0 \eta_2 \Pi_0/P_s,
\]
where $\Pi_0$ is the projector onto the code space of $\css{X,\calG_0;Z,\calG^\perp}$
and $P_s=\trace{(\eta_2 \Pi_0)}$ is the success probability.
The state $\eta_3$ has only contribution from errors
$Z(f)$ with $f\in \calG_0^\perp=\calG_1\oplus \calG^\perp$,
see Lemma~\ref{lemma:simple}, since these are the only $Z$-type errors
commuting with all $X$-type stabilizers. Hence the success probability is
\begin{equation}
\label{P_s}
P_s=\sum_{f\in  \calG_0^\perp} (1-p)^{n-|f|} p^{|f|}=\frac1{|\calG_0|} \sum_{f\in  \calG_0} (1-2p)^{|f|},
\end{equation}
where the second equality uses the MacWilliams identity~\cite{macslo}.
Any vector $f\in \calG_1\oplus \calG^\perp$
can be written as $f=g+x_1 f^1 + \ldots + x_kf^k$, where $g\in \calG^\perp$
and $f^1,\ldots,f^k$ are the rows of $G_1$. Since $Z(g)$ is a stabilizer, we conclude that
\begin{align*}
Z(f) | \overline{A^{\otimes k}} \rangle
&= Z(x_1 f^1 + \ldots + x_kf^k) | \overline{A^{\otimes k}} \rangle \\
&= \overline{Z}_1^{x_1} \cdots \overline{Z}_k^{x_k} | \overline{A^{\otimes k}} \rangle.
\end{align*}
Here we used definition of the logical $Z$-type operators, see Eq.~(\ref{logical}).
Hence the state $\eta_3$
 coincides  with an encoded $k$-qubit mixed state
\begin{equation}
\label{rho_out}
\rho_{out}=\frac1{P_s} \sum_{x\in \FF_2^k} p_{out}(x) \ket{A_x}\bra{A_x},
\end{equation}
where $\ket{A_x}=\ket {A_{x_1}} \otimes \cdots \otimes  \ket{A_{x_k}}$ and
\begin{equation}
\label{p_out}
p_{out}(x)=\sum_{f\in  \calG^\perp+ x_1 f^1 + \ldots + x_k f^k} \; \;
(1-p)^{n-|f|} p^{|f|}.
\end{equation}
The last step of the subroutine is to decode $\css{X,\calG_0;Z,\calG^\perp}$
whereby mapping $\eta_3$ to $\rho_{out}$.
The $k$-qubit state $\rho_{out}$ is
the output state of the distillation subroutine.
The reduced density matrix describing the $a$-th output qubit can
be written as
\[
\rho_{out,a}=(1-q_a)\ket{A_0}\bra{A_0} + q_a \ket{A_1}\bra{A_1},
\]
where $q_a$ is the output error rate on the $a$-th qubit:
\[
q_a=1-\frac1{P_s} \sum_{x\, :\, x_a=0} p_{out}(x).
\]
Let $\calK_a$ be the sum of $\calG^\perp$ and the space
spanned by all rows of $G_1$ except for $a$. Lemma~\ref{lemma:simple}
implies that $\dim{\calK_a}=\dim{\calG_0^\perp}-1$.
On the other hand, $\calK_a \subseteq (\calG_0 \oplus (f^a))^\perp$,
where $(f^a)=\{0^n,f^a\}$ is the one-dimensional subspace spanned by $f^a$.
Hence $\calK_a=(\calG_0 \oplus (f^a))^\perp$ and thus
\begin{equation}
\label{q}
q_a=1-\frac{\sum_{f\in  (\calG_0 \oplus (f^a))^\perp} (1-p)^{n-|f|} p^{|f|}}
{\sum_{f\in  \calG_0^\perp} (1-p)^{n-|f|} p^{|f|}}.
\end{equation}
We shall be mostly interested in the worst-case output error rate
\begin{equation}
\label{qmax}
q=\max_{a=1,\ldots,k} \; q_a.
\end{equation}
Output qubits with $q_a<q$ can be additionally dephased in the $A$-basis
to achieve $q_a=q$.
From Eq.~(\ref{q}) we infer that
$q=O(p^d)$, where $d$ is the minimum weight of a vector $f\in \calG_0^\perp$
such that $(f,f^a)=1$ for some $a$. Equivalently,
\begin{equation}
\label{distance}
d=\min_{f\in \calG_0^\perp\backslash \calG^\perp}\; |f|
\end{equation}
is the distance of the code $\css{X,\calG_0;Z,\calG^\perp}$
against $Z$-type errors.
Using the MacWilliams identity, we also get
\begin{equation}
\label{qdual}
q_a=1-\frac12 \frac{ \sum_{f\in  \calG_0\oplus (f^a)} (1-2p)^{|f|}}
{\sum_{f\in  \calG_0} (1-2p)^{|f|}}.
\end{equation}
This expression can be easily evaluated numerically
in the important case when $G_0$ has only a few rows.

The above subroutine requires $n$ extra
qubits to prepare the encoded $\ket{+^{\otimes k}}$ state,
while the total number of Pauli measurements is $n+m-k$.
In Appendix~\ref{appdst} we describe an alternative subroutine
which is slightly less intuitive but does not require any extra qubits
and uses only $n-k$ Pauli measurements.
Both subroutines output the same state and have the same success
probability.

\section{Full distillation protocol}
\label{sec:full}

The final goal of the distillation is to prepare a state $\sigma$ of $N$ qubits
such that the overlap between $\sigma$ and $N$-copies of the magic
state $\ket A$ is sufficiently close to $1$, say, at least $2/3$.
Such state $\sigma$ can be used as a resource to simulate
any quantum circuit that contains Clifford gates and at most $N$ gates $T$
using only CO with an overall  error probability at most $1/3$.
Each qubit of $\sigma$ allows one to simulate one $T$-gate using the scheme
shown on Fig.~\ref{fig:Tgate}.

Let $\sigma_j$ be the reduced density matrix describing the $j$-th qubit
of $\sigma$. For any given target error rate $\epsilon$ our full protocol will distill a state
$\sigma$ which is diagonal in the basis $\{ \ket{A_0}, \, \ket{A_1} \}^n$ and such that
\begin{equation}
\label{marginal}
\max_j \bra{A_1} \sigma_j \ket{A_1} \le \epsilon.
\end{equation}
The standard union bound then implies that the overlap
$\bra{ A_0^{\otimes N} } \sigma \ket{ A_0^{\otimes N}}$ is close to $1$
whenever $\epsilon \sim 1/N$.

In order to distill $N$ magic states with the target error rate $\epsilon$, the elementary
subroutine described in Section~\ref{sec:dist} will be applied recursively
such that each input state $\rho$ consumed by a level-$m$ distillation
subroutine is one of the output states $\rho_{out,a}$ distilled by some  level-$(m-1)$
subroutine. The recursion starts at a level $m=0$ with $NC$ input states,
where $C=C(\epsilon)$ is the distillation cost.
 In the limit $N \gg 1$ the distillation rounds can be organized such
that all $n$ input states $\rho$ consumed by any elementary subroutine at a level $m$
have been distilled at {\em different} subroutines at the level $m-1$, see Lemma~IV in~\cite{MEK}.
It allows one to disregard correlations between errors  and analyze
the full protocol using the average {\em yield}
\[
\Gamma(p)=\frac{k P_s(p)}{n},
\]
that is, the average number of output states with an error rate $q(p)$ per one input state
with an error rate $p$.  Here
$q$ is defined in Eqs.~(\ref{qmax},\ref{qdual}).
Neglecting the fluctuations, the distillation cost $C$,
the input error rate $p$,
the target error rate $\epsilon$, and the required number of levels $m_0$ are related by the following obvious equations:
\begin{align}
C_{m+1}&=\Gamma(p_m)C_m,           &\nonumber \\
p_{m+1}&= q(p_m),                  & m=0,\ldots,m_0-1, \nonumber \\
p_{m_0}&=\epsilon, \quad p_0=p,    & \nonumber \\
C_{m_0}&=1, \quad C_0=C.
\label{eq:cost-recursion}
\end{align}
In the limit of small $p$ one has $P_s(p)\approx 1$ and thus $\Gamma(p)\approx k/n$.
Taking into account that $q=O(p^d)$, where the distance $d$ is defined in Eq.~(\ref{distance}), one arrives at
\begin{equation}
\label{gamma}
C(\epsilon)=O(\log^\gamma{(1/\epsilon)}), \quad \gamma=\frac{\log{(n/k)}}{\log{(d)}},
\end{equation}
provided that the input error rate $p$ is below a constant threshold value $p_\text{th}$,
that depends on the chosen triorthogonal matrix.

We conjecture that the scaling exponent $\gamma$ of the
distillation cost $C$ cannot be smaller than $1$
for any concatenated distillation protocol based on a triorthogonal matrix.
Indeed, suppose the output error rate satisfies $q(p) \le c p^d < p$ for $p < p_0$ and $q(1) =1$.
As noted above, the potential correlation in the error probabilities among the output states may be ignored.
Then, after $m$ levels of distillation the output error rate should satisfy
\[
\epsilon \le c^{-1/(d-1)}(c'p_0)^{d^m}
\]
where $c' = c^{(2-d)/(d-1)}$.
Let $\alpha = n/k$ be the inverse yield in the small input error rate limit.
Clearly, $C \ge \alpha^m$.
Since $q(1)=1$, the probability that the output is the desired magic state
can be at most $1-p_0^C$.
It follows that $p_0 ^C \le \epsilon$, and therefore, $\alpha \ge d$.
We conclude that
\[
C \ge d^m = \Omega( \log (1/\epsilon) ).
\]

\section{A family of triorthogonal matrices}
\label{sec:family}

To construct explicit distillation protocols,
triorthogonal matrices $G$ with high yield $k/n$ are called for.
A natural strategy to maximize the yield is to keep the number
of even-weight rows in $G$ as small as possible.
 Indeed, each extra row in $G_0$
increases the number of  constraints due to Eqs.~(\ref{ort2},\ref{ort3})
without increasing the yield.
However, the number of rows in $G_0$ cannot be too small.
Recall that the distillation subroutine of Section~\ref{sec:dist}
improves the quality of magic states only if $d\ge 2$, where
$d$ is the distance of the code $\css{X,\calG_0;Z,\calG^\perp}$
against $Z$-errors defined in Eq.~(\ref{distance}).
We claim that $d=1$ whenever $G_0$ has less than three rows.
Indeed, in this case Lemma~\ref{lemma:3rows} implies that
$G_0$ must have a zero column, say, the first one.
Then $e_1\equiv (1,0,\ldots,0)\in \calG_0^\perp$. On the other hand,
$e_1\notin \calG^\perp$ since otherwise the first column of $G$ would be zero.
It shows that
$d=1$, see Eq.~(\ref{distance}). Hence a good strategy is
to look for candidate triorthogonal matrices with $3$ even-weight rows
such that $G_0$ has no zero columns. This guarantees $d\ge 2$.

Below  we present a family of triorthogonal matrices with yield $k/n = k/(3k+8)$
where $k$ is even.
The matrices are constructed from several simple submatrices, which we define first:
\begin{align}
L = \begin{bmatrix}
1 & 1 & 1 & 1 \\
1 & 1 & 1 & 1 \\
\end{bmatrix},\quad
& 
M = \begin{bmatrix}
1 & 1 & 1 & 0 & 0 & 0 \\
0 & 0 & 0 & 1 & 1 & 1 \\
\end{bmatrix}, \nonumber \\
S_1 = \begin{bmatrix}
0 & 1 & 0 & 1 \\
0 & 0 & 1 & 1 \\
1 & 1 & 1 & 1 \\
\end{bmatrix},\quad
& 
S_2 = \begin{bmatrix}
1 & 0 & 1 & 1 & 0 & 1 \\
0 & 1 & 1 & 0 & 1 & 1 \\
0 & 0 & 0 & 0 & 0 & 0 \\
\end{bmatrix}.
\end{align}
For each even number $k \ge 0$, define $(k+3) \times (3k+8)$ matrix
\begin{align}
G(k) =
\begin{bmatrix}
  0   & L    & M    & 0   & \cdots & 0   \\
  0   & L    & 0    & M   &        & 0   \\
\vdots&\vdots&\vdots&     & \ddots & 0   \\
  0   & L    & 0    & 0   & \cdots & M   \\
S_1   & S_1  & S_2 & S_2 & \cdots & S_2 \\
\end{bmatrix},
\end{align}
where $L,M$, and $S_2$ respectively appear $k/2$ times.

This family of matrices is triorthogonal with $k$ odd-weight rows
and $3$ even-weight rows.
To see this, first consider the usual orthogonality condition Eq.~\eqref{ort2}.
Any pair of rows from $G(k)_1$, the upper $k$ rows, overlap in $L$, which has weight 4.
The bottom three rows, $G(k)_0$ give three pairs whose overlaps have weight $4,4$, and $2+k$, respectively.
A row from $G(k)_1$ and another from $G(k)_0$ overlap at 4 positions.
Thus, the rows of $G(k)$ are mutually orthogonal.
One can check similarly the triorthogonality condition Eq.~\eqref{ort3}.

For any linear space $\calF\subseteq \FF_2^n$ define its
weight enumerator as $W_\calF(x)=\sum_{f\in \calF} x^{|f|}$.
The error analysis in Section~\ref{sec:dist} requires the weight enumerators
of $\calG(k)_0$
and $\calG(k)_0 \oplus (g^a)$ for all $a = 1,\ldots, k$, where $g^a$ are the rows of $G(k)_1$.
Due to the periodic structure of $G(k)$,
the weight enumerator of $\calG(k)_0 \oplus (g^a)$ is independent of $a$.
The classical codes $\calG(k)_0$ and $\calG(k)_0 \oplus (g^1)$ have only 8 and 16 code vectors, respectively,
and therefore an explicit calculation is easy:
\begin{align}
 W_{\calG(k)_0}(x) &= 1 + x^8 + 6x^{4+2k} \\
W_{\calG(k)_0 \oplus (g^1)}(x) &= 1+ 2 x^7 + x^8 + 6 x^{3 + 2 k} + 6 x^{4+2k} \nonumber
\end{align}

If $G(k)$ is used in our distillation protocol,
the success probability or acceptance rate given the input error rate $p$
is
\[
 P_s(p) = 1 - (8+3k) p + \cdots,
\]
and the output error rate $q$ on any \emph{one} qubit is
\[
 q(p) = (1+3k)p^2 + \cdots
\]
by Eq.~\eqref{qdual}, where $\cdots$ indicate higher order terms in $p$.
The initial term of $q(p)$ can be intuitively understood.
Since the stabilizer code $\css{X,\calG(k)_0;Z,\calG(k)^\perp}$ has logical $Z$ operators of weight 2,
the probability that there is an undetected error on the output qubit is $O(p^2)$.
The coefficient of $p^2$ is the number of logical $Z$ operators of weight 2
that acts nontrivially on a particular logical qubit,
which is readily counted as $4+3(k-1)$.

The threshold input error rate can be obtained by the requirement that $q(p) < p$.
From the leading term of $q(p)$, one may estimate the threshold as
\[
p_{\text{th}} \approx \frac{1}{3k+1}.
\]
Provided that the input error rate is smaller than $p_\text{th}$,
solving Eq.~\eqref{eq:cost-recursion} gives
\[
 C(\epsilon) = O\left(\log^\gamma \frac{1}{\epsilon}\right), \quad \gamma = \log_2 \frac{3k+8}{k} .
\]
The scaling exponent $\gamma$ reaches $\log_2 3 \approx 1.585$ in the large $k$ limit,
which is the best to the authors' awareness.

\section{Comparison with known protocols}
\label{sec:cost}

\begin{figure}[btp]
 \includegraphics[width=.45\textwidth]{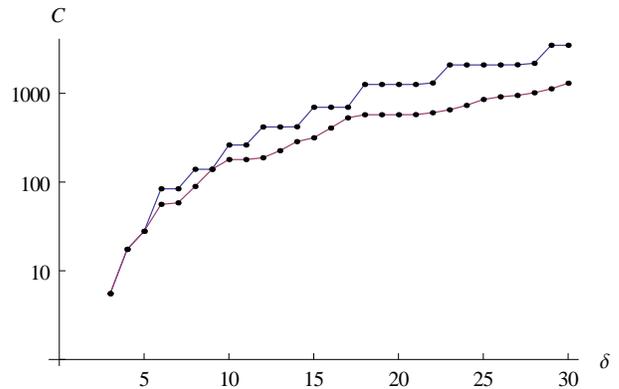}
\caption{Distillation cost $C$
as a function of the target error rate $\epsilon = 10^{-\delta}$
for a fixed input error rate $p=0.01$.
The upper curve is obtained from \cite{MEK},
and the lower curve from the optimization using the triorthogonal matrices $G(k)$.
The lines are mere guide to eyes.}
\label{fig:cost-vs-epsilon}
\end{figure}

The output error rate improves most greatly
when the input error rate is much smaller than the threshold of the protocol.
One cannot thus use $G(k)$ naively with large $k$ since the threshold is inversely proportional to $k$.
It is therefore desirable to concatenate various protocols to minimize the resource requirement.
This optimization is carried out for illustrative purpose by a numerical computation.
We restrict the number of rounds to be less than or equal to 5,
and consider all possible combinations of
\begin{enumerate}
 \item (``15'') the 15-to-1 protocol~\cite{BK04},
 \item (``5'') the 10-to-2 protocol~\cite{MEK},
 \item (``$k$'') the $(3k+8)$-to-$k$ protocol using the triorthogonal matrices $G(k)$ for $k = 2,4,6,\ldots,40$, and
 \item (``49'') the 49-to-1 protocol presented in Appendix~\ref{app49}.
\end{enumerate}
The result is summarized in Table~\ref{tb:cost},
where the numbers in the parenthesis above are used to denote each subroutine.
Unfortunately, ``49'' in the optimization
had found no place in the best combinations.
See also Fig.~\ref{fig:cost-vs-epsilon}.

A general rule is that it is better to use high threshold protocols for initial rounds,
and then use high yield protocols when the error rate becomes small.

\begin{table}[tbp]
\begin{tabular}{c|c|c|c|c}
\hline
\hline
$-\log_{10} \epsilon_\text{target}$ & Protocol  &  $-\log_{10} \epsilon_\text{actual}$ & $C$ & $C_\text{MEK}$ \\
\hline
 3 & 5 & 3.030 & 5.521 & 5.521 \\
 4 & 15 & 4.443 & 17.44 & 17.44 \\
 5 & \text{5-5} & 5.104 & 27.86 & 27.86 \\
 6 & \text{15-40} & 6.802 & 56.07 & 83.99 \\
 7 & \text{15-24} & 7.022 & 58.30 & 83.99 \\
 8 & \text{5-5-40} & 8.125 & 89.26 & 139.3 \\
 9 & \text{5-5-5} & 9.253 & 139.3 & 139.3 \\
 10 & \text{15-40-40} & 11.52 & 179.4 & 261.7 \\
 11 & \text{15-40-40} & 11.52 & 179.4 & 261.7 \\
 12 & \text{15-24-36} & 12.01 & 187.9 & 418.0 \\
 13 & \text{15-10-20} & 13.00 & 225.6 & 418.0 \\
 14 & \text{5-5-40-40} & 14.17 & 285.6 & 419.9 \\
 15 & \text{5-5-18-28} & 15.00 & 315.5 & 696.7 \\
 16 & \text{5-5-6-22} & 16.03 & 406.2 & 696.7 \\
 17 & \text{5-5-5-10} & 17.02 & 529.5 & 696.7 \\
 18 & \text{15-40-40-40} & 20.96 & 574.1 & 1260. \\
 19 & \text{15-40-40-40} & 20.96 & 574.1 & 1260. \\
 20 & \text{15-40-40-40} & 20.96 & 574.1 & 1260. \\
 21 & \text{15-38-40-40} & 21.05 & 575.9 & 1260. \\
 22 & \text{15-22-38-40} & 22.03 & 604.3 & 1308. \\
 23 & \text{15-14-30-40} & 23.01 & 652.3 & 2090. \\
 24 & \text{15-10-18-40} & 24.01 & 731.5 & 2090. \\
 25 & \text{15-6-16-36} & 25.01 & 853.1 & 2090. \\
 26 & \text{5-5-40-40-40} & 26.25 & 914.0 & 2090. \\
 27 & \text{5-5-26-38-40} & 27.04 & 947.5 & 2100. \\
 28 & \text{5-5-16-32-40} & 28.01 & 1015. & 2181. \\
 29 & \text{5-5-10-26-38} & 29.01 & 1125. & 3483. \\
 30 & \text{5-5-8-14-30} & 30.01 & 1301. & 3483. \\
\hline
\hline
\end{tabular}
\caption{
Minimum average number $C$ of required input magic states of the fixed error rate $p_\text{in}=0.01$
to distill a single output magic state of error rate $\le \epsilon_\text{target}$.
The sequence of labels in the second column denotes the subroutines in order from left to right.
An even number $k$ in the second column denotes the one round of distillation using $G(k)$.
``15'' and ``5'' respectively represent the protocol by \cite{BK04} and \cite{MEK}.
$C_\text{MEK}$ utilized only ``15'' and ``5''.
The table is numerically optimized under the restriction that there be at most 5 rounds of distillation.
}
\label{tb:cost}
\end{table}

\section{Linear equations for triorthogonal matrices}
\label{sec:linear}

The triorthogonality Eq.~(\ref{ort2},\ref{ort3}) in general depends on a particular presentation of $G$
and is not automatically guaranteed by the classical code $\calG$.
However, a certain choice of variables associated to $G$ yields a set of linear equations over $\FF_2$,
equivalent to the triorthogonality.
This system of linear equations makes numerical search effective.

Suppose a triorthogonal matrix $G$ is of size $m \times n$.
Let $x = (x_1,\ldots, x_m) \in \FF_2^m$ denote an arbitrary $m$-bit string.
Each column of the matrix $G$ corresponds to a particular $x \in \FF_2^m$;
in other words, $G$ is described by $n$ such bit strings $x$.
The cardinality of the overlap between $a$-th and $b$-th row ($a \neq b$)
is exactly the number of columns $x$ in $G$ such that $x_a = x_b = 1$.
Let $N_x$ be the number of columns $x$ appearing in $G$.
Then, the usual orthogonality condition Eq.~\eqref{ort2} can be written as
\begin{equation}
 \sum_{x \in \FF_2^m: x_a = x_b = 1} N_x = 0 \pmod 2
\label{eq:ort2linear}
\end{equation}
for distinct $a,b$.
Likewise, the cardinality of the triple overlap among distinct rows $a,b,c$ is
exactly the number of columns $x$ such that $x_a = x_b = x_c = 1$.
Therefore, the triorthogonality condition Eq.~\eqref{ort3} is equivalent to
\begin{equation}
 \sum_{x \in \FF_2^m: x_a = x_b = x_c = 1} N_x = 0 \pmod 2
\label{eq:ort3linear}
\end{equation}
for distinct $a,b,c$.
The weight of each row $a$ is the sum $\sum_{x : x_a = 1} N_x$.
Demanding $k$ odd-weight rows of $G$ is possible by the following inhomogeneous equations.
\begin{align}
\sum_{x \in \FF_2^m: x_a = 1} N_x =
\begin{cases}
 1 \pmod 2 & \text{if } 1 \le a \le k, \\
 0 \pmod 2 & \text{otherwise.}
\end{cases}
\label{eq:wt-linear}
\end{align}

Conversely, treating all $N_x$ as unknown binary variables, any solution to
Eqs.~(\ref{eq:ort2linear},\ref{eq:ort3linear},\ref{eq:wt-linear})
gives rise to a triorthogonal matrix. Namely, we just write a column $x^T=(x_1,\ldots,x_m)^T$ whenever $N_x = 1$.
The number of columns of the resulting matrix will be the Hamming weight of the vector $N$
whose components are indexed by $x \in \FF_2^m$.

One does not have to be concerned about the situation $N_x > 1$
because it only produces less efficient protocols for magic state distillation.
Suppose there are repeated columns in an $n' \times m$ triorthogonal matrix $G'$,
and let $G$ be the $n \times m$ triorthogonal matrix
obtained from $G'$ by removing repeated columns in pairs.
Consider $Z(f)$, a logical operator of $\calC' = \css{X,\calG'_0;Z,\calG'^\perp}$ of minimal weight.
The support of $f$ should not involve any pair of indices of the repeated columns due to the minimality.
Hence, $Z(f)$ may be thought of a logical operator of $\calC = \css{X,\calG_0;Z,\calG^\perp}$.
Conversely, any logical operator of $\calC$ can be viewed as that of $\calC'$.
Therefore, $\calC$ and $\calC'$ have the same minimal weight for $Z$-type logical operators,
but $\calC'$ has longer length.
For the same reason, it is safe to assume $N_{(0,0,\ldots,0)} = 0$.

The set of all solutions to the Eqs.~(\ref{eq:ort2linear},\ref{eq:ort3linear},\ref{eq:wt-linear})
contains useless triorthogonal matrices.
In order for a protocol to be useful,
the minimal weight for $Z$-type logical operators must be at least $2$.
If a triorthogonal matrix $G$ has an all zero column in $G_0$,
the lower $m-k$ even-weight rows, then the resulting stabilizer code
$\css{X,\calG_0;Z,\calG^\perp}$ admits weight one $Z$-type logical operator.
Thus, we should impose the following linear constraints.
\begin{equation}
 N_{(x_1,\ldots,x_k, 0, \ldots, 0)} = 0
\end{equation}
for all $(x_1,\ldots,x_k)\in \FF_2^k$.

So, given the number $m$ of rows of $G$ and the number $k$ of odd-weight rows,
one can solve the above equations over $\FF_2$ to find the minimal weight solution $N$.
There are $2^m$ variables $N_x$ and $2^k + \binom{m}{1} + \binom{m}{2} + \binom{m}{3}$ equations.
Note that due to Lemma~\ref{lemma:3rows}, one has to consider the case $m - k \ge 3$.

\appendix
\section{Alternative distillation subroutine}
\label{appdst}
In this section we show that the distillation scheme proposed in
Ref.~\cite{BK04} can be adapted to any stabilizer code based on
a triorthogonal matrix. It can serve as an alternative to the
subroutine described in Section~\ref{sec:dist}. Both subroutines
output the same state and have the same success probability.

Let $G$ be any triorthogonal matrix with $n$ columns, $k$ odd-weight
rows $f^1,\ldots,f^k$, and $m-k$ even-weight rows.
Consider the following distillation protocol that takes
$n$ input qubits and outputs $k$ qubits.
\begin{enumerate}
\item Measure eigenvalues of $Z(f)$, $f\in \calG^\perp$.
Let the eigenvalue of $Z(f)$ be $(-1)^{\mu(f)}$,
where $\mu\, :\, \calG^\perp \to \FF_2$ is a linear function ($Z$-syndrome).
\item  Choose any $w \in \FF_2^n$ such that $\mu(f)=(w,f)$
for all $f\in \calG^\perp$. Apply $A(w)^\dag$.
\item Apply unitary $U$ from Lemma~\ref{lemma:transversal}.
\item Measure eigenvalues of $X(g)$, $g\in \calG_0$.
Declare `FAILED' unless all eigenvalues are $+1$.
\item Decode $\css{X,\calG_0;Z,\calG^\perp}$.
\end{enumerate}
Note that the measurements of $Z(f)$ and $X(g)$ at Steps~1,4
only need to be performed for basis vectors  $f\in \calG^\perp$ and $g\in \calG_0$ respectively.
Hence the total number of Pauli measurements is
\[
\dim{(\calG^\perp)}+\dim{(\calG_0)}=(n-m)+(m-k)=n-k.
\]
Let $\rho=(1-p)\ket{A_0}\bra{A_0} + p \ket{A_1}\bra{A_1}$ be the raw ancilla.
We claim that the above protocol maps $\rho^{\otimes n}$
to the output state defined in Eqs.~(\ref{rho_out},\ref{p_out}),
while the success probability $P_s(p)$ is given by Eq.~(\ref{P_s}).
Indeed, since
the input state $\rho^{\otimes n}$ is diagonal in the $A$-basis
and the correcting operator $A(w)^\dag$
has the same $Z$-syndrome as the one measured at Step~1,
the state obtained after Step~2 is
\[
\eta_2=\Pi_Z \rho^{\otimes n} \Pi_Z/\calZ,
\]
where $\Pi_Z$ projects onto the subspace
with the trivial $Z$-syndrome
and $\calZ$ is a normalizing coefficient such that $\trace{(\eta_2)}=1$.
Since $\rho=\calE(\ket{A}\bra{A})$,
where $\calE$ involves only $Z$-errors, see Eq.~(\ref{TE}), one gets
\begin{equation}
\label{calZ}
\calZ= \bra{A^{\otimes n}} \Pi_Z \ket{A^{\otimes n}}
=\bra{ +^n} \Pi_Z \ket{+^n}.
\end{equation}
Consider a pair of codes
\[
\calC_X\equiv \css{X,\calG_0;Z,\calG^\perp} \quad \text{and} \quad
\calC_A\equiv \css{A,\calG_0;Z,\calG^\perp},
\]
where we adopt notations of Ref.~\cite{BK04}. Note that
$\calC_A$ has non-Pauli stabilizers $A(g)$, $g\in \calG_0$
in addition to Pauli ones $Z(g)$, $g\in \calG^\perp$.
By abuse of notations we shall sometimes identify $\calC_X$ and $\calC_A$
with the codespaces of the respective codes.
Taking into account that $A=TXT^\dag$ and $TZ=ZT$ we conclude that $\calC_A=\hat{T}\cdot \calC_X$, where $\hat{T}=T^{\otimes n}$.
Let $U$ be the diagonal Clifford unitary constructed in Lemma~\ref{lemma:transversal}.
From Eq.~(\ref{UT}) we infer that
$U\hat{T}$ preserves the codespace
$\calC_X$ and thus
\begin{equation}
\label{A-X}
U \cdot \calC_A = \calC_X.
\end{equation}
This shows that $\ket \psi \in \calC_A$ can be specified
by eigenvalue equations $\Pi_Z \ket \psi= \ket \psi$ and
\begin{equation}
\label{CA_stab}
U^\dag X(g) U \ket \psi = \ket \psi \quad \text{for all }  g\in \calG_0.
\end{equation}
To analyze the rest of the protocol it will be convenient to insert
two dummy steps between Step~4 and Step~5, namely,
{\em Step~4a:} Apply $U^\dag$, and {\em Step~4b:} Apply $U$.
Taking into account Eq.~(\ref{CA_stab}) we conclude that
the overall effect of Steps~1-4a is to project the state
$\rho^{\otimes n}$ onto the codespace $\calC_A$. Let $\Pi_A$ be the projector
onto the subspace with the trivial $A$-syndrome of the code $\calC_A$. Then the
(unnormalized) state obtained after Step~4a is
\[
\eta_{4a}= \Pi_Z \Pi_A \rho^{\otimes n} \Pi_A \Pi_Z/\calZ,
\]
while the success probability is determined by $P_s=\trace{(\eta_{4a})}$.
Consider any term $\Pi_A Z(f) |A^{\otimes n}\rangle$ in $\eta_{4a}$.
Since $\Pi_A |A^{\otimes n}\rangle=|A^{\otimes n}\rangle$, the state $\eta_{4a}$
gets contributions only from errors $Z(f)$ such that $\Pi_A Z(f)\Pi_A\ne 0$.
Such errors must commute with any $A$-type stabilizer which is possible only
if $f\in \calG_0^\perp$. In this case one has $\Pi_A Z(f)=Z(f)\Pi_A$.
This shows that
\[
\eta_{4a}=\frac1{\calZ} \tilde{\calE} ( \Pi_Z |A \rangle\langle A| ^{\otimes n} \Pi_Z),
\]
where $\tilde{\calE}$ is a linear map defined as
\[
\tilde{\calE}(\eta)= \sum_{f\in \calG_0^\perp} (1-p)^{n-|f|} p^{|f|} Z(f) \eta Z(f).
\]
The identity $\ket A =T \ket +$ and Lemma~\ref{lemma:transversal} yield
\[
\frac{\Pi_Z\, |A^{\otimes n}\rangle}{\sqrt{\calZ}} =
\frac{\hat{T} \Pi_Z \, |+^{\otimes n} \rangle}{\sqrt{\calZ}}= \hat{T} \, \ket G =
U^\dag |\overline{A^{\otimes k}}\rangle.
\]
Note that all states above are normalized.
Thus the state obtained after Step~4b (i.e. after Step~4 of the original protocol) is
\[
\eta_4=\tilde{\calE}(|\overline{A^{\otimes k}}\rangle \langle \overline{A^{\otimes k}}|).
\]
This shows that $P_s=\trace{(\eta_4)}$ is indeed given by Eq.~(\ref{P_s}).
As was shown in Section~\ref{sec:dist},
decoding the state $\eta_4$ yields the desired output state Eq.~(\ref{rho_out}).

\vspace{3mm}

\section{49-to-1 protocol}
\label{app49}

The approach pursued in this paper aims at minimizing
the distillation cost scaling exponent
$\gamma = \log(n/k) / \log d$ by constructing codes
with high yield $k/n$ and $d=2$.
An alternative method of constructing codes with large distance $d$
and small yield (e.g. $k=1$) appears to be less fruitful.
Using the linear system method of Section~\ref{sec:linear}
we were able to find a $49$-qubit code with $k=1$
that admits a transversal $T$-gate and has distance
 $d=5$. The corresponding  triorthogonal matrix $G_0$
 of size $13\times 49$
 is shown below.
\begin{align*}
G_{0} =
\tiny{
\begin{bmatrix}
1111111111111110101010101010101010101010101010101 \\
0000000000000000000111100110011000011001100110011 \\
0000000000000001100000011001100110000000000000000 \\
0000000000000000000000000000000001111000000001111 \\
0000000000000000011110000000000000000111100000000 \\
0000000000000000000001111000011110000000000000000 \\
0000000000000000000000000111111110000000000000000 \\
0000000000000000000000000000000001111111100000000 \\
0000000000000000000000000000000000000000011111111 \\
1010101010101010000000000000000000000000000000000 \\
0110011001100110000000000000000000000000000000000 \\
0001111000011110000000000000000000000000000000000 \\
0000000111111110000000000000000000000000000000000 \\
\end{bmatrix}
}
\end{align*}
The weight enumerator of $\calG_0$ computed numerically is
\[
 W_{49}(x) = 1+32 x^8+442 x^{16}+6696 x^{24}+1021 x^{32} .
\]
Thus, $\calG_0$ is a triply-even linear code~\cite{BetsumiyaMunemasa2010triply},
that is, $|f|=0\pmod 8$ for any $f\in \calG_0$.
By adding all-ones row to $G_0$, one obtains a
triorthogonal matrix $G$ with $k=1$. It leads to a
protocol distilling $1$ magic state out of $49$ input states.
Note that for any triorthogonal matrix with one odd-weight row
$1^n$  the relevant distance $d$ defined in Eq.~(\ref{distance}) can be
written as
\begin{equation}
\label{distance1}
d=\min_{\substack{f\in \calG_0^\perp \\ \text{$|f|$ is odd} \\ }} \; \; |f|.
\end{equation}
We have checked numerically that $d=5$ for the $49$-qubit code.
Since the code is triply-even,
the Clifford operator $U$ defined in  Lemma~\ref{lemma:transversal} is the identity.
The output error rate as a function of input error rate has the leading term
\[
 q_{49}(p) = 1411 p^5 + \cdots .
\]
The distillation threshold was found to be $p_{49;\text{th}} = 0.1366$.

We note that the above $49$-qubit code is optimal in the sense
that there are no triply-even linear codes of odd length $n\le 47$
such that the distance $d$ defined in Eq.~(\ref{distance1})
is greater than $3$.
This fact can be checked numerically using the classification
of all maximal triply-even codes of length $48$ found in~\cite{BetsumiyaMunemasa2010triply}.
A maximal triply-even code of length $47$ or shorter
can be thought of as a subcode of some maximal triply-even code of length $48$
obtained by imposing the linear condition for one component to be zero.
Using results of~\cite{BetsumiyaMunemasa2010triply} we
were able to examine numerically
all maximal triply-even codes of length $47$.
We found that $d\le 3$ for all such codes.
Further shortening cannot increase the distance $d$.

\acknowledgements
JH is in part supported
by the Institute for Quantum Information and Matter (IQIM), an NSF Physics Frontier Center,
and by the Korea Foundation for Advanced Studies.
JH thanks the hospitality of IBM Watson Research Center,
where he was a summer intern while this work is done.
SB was partially supported by the
DARPA QUEST program under contract number HR0011-09-C-0047
and by the Intelligence Advanced Research Projects Activity (IARPA) via Department of Interior National Business Center contract number D11PC20167. The U.S. Government is authorized to reproduce and distribute reprints for Governmental purposes notwithstanding any copyright annotation thereon. Disclaimer: The views and conclusions contained herein are those of the authors and should not be interpreted as necessarily representing the official policies or endorsements, either expressed or implied, of IARPA, DoI/NBC, or the U.S. Government.

%

\end{document}